\documentclass[10pt,journal]{IEEEtran}

\usepackage{cite}
\usepackage{amsmath, amssymb, amsfonts, amsthm}
\usepackage{algorithm}
\usepackage[noend]{algorithmic}
\usepackage{graphicx}
\usepackage{textcomp}
\usepackage{xcolor}
\usepackage{bm}
\usepackage{booktabs}
\usepackage{autobreak}
\usepackage{multirow}
\usepackage{authblk}
\usepackage{mathrsfs}
\usepackage{array}
\usepackage{makecell} 
\usepackage{float}
\usepackage{soul}
\usepackage{stmaryrd}
\usepackage{dsfont}
\usepackage{bbm}
\usepackage[hidelinks]{hyperref}
\usepackage{cleveref}
\usepackage{pifont}
\usepackage{enumitem}
\usepackage{flushend}

\newcommand{\Tr}{\operatorname{Tr}}
\newcommand{\Sset}{\mathcal{S}}
\newcommand{\Aset}{\mathcal{A}}
\newcommand{\Oset}{\mathcal{O}}
\newcommand{\Mset}{\mathcal{M}}
\newcommand{\Tset}{\mathcal{T}}
\newcommand{\CVaR}{\mathrm{CVaR}}
\newcommand{\OCE}{\mathrm{OCE}}

\newtheorem{proposition}{Proposition}
\newtheorem{remark}{Remark}
\newtheorem{corollary}{Corollary}

\begin{document}

\title{Risk-Averse Decision Making via Quantum Measurement Design}

\author{Meiyi Zhu, \IEEEmembership{Member, IEEE}, and Osvaldo Simeone, \IEEEmembership{Fellow, IEEE}

\vspace{-2em}

\thanks{The work of M. Zhu and O. Simeone was supported by an Open Fellowship of the EPSRC (EP/W024101/1). The work of O. Simeone was also supported by EPSRC (EP/X011852/1) and ERC (No. 101198347)

Meiyi Zhu is with the Department of Engineering, King's College London, WC2R 2LS, London, U.K. (e-mail: meiyi.1.zhu@kcl.ac.uk).

Osvaldo Simeone is with the Institute for Intelligent Networked Systems, Northeastern University London, E1 8PH London, U.K. (e-mail: o.simeone@northeastern.edu).}
}

\maketitle

\begin{abstract}
Quantum measurements are conventionally optimized to maximize the average of a utility that depends on the true state and on the measurement outcome. However, when the outcome of the measurement is used as an action within a larger decision-making system, the average utility does not capture the risk of poor outcomes. This letter addresses the design of quantum measurements that maximize a risk-averse objective given by the optimized certainty equivalent (OCE), a family of criteria that includes the average utility and the conditional value at risk (CVaR) as special cases. For a piecewise linear gain function, defining the OCE, thus including the CVaR, the problem is shown to reduce to a finite number of semidefinite programs, for which a dual formulation is derived. For the discrimination of two states, a closed-form solution is obtained that takes the form of a Helstrom measurement. Numerical results show that the optimized measurement improves the lower tail of the utility distribution at a moderate cost in average utility.
\end{abstract}

\vspace{-0mm}
\begin{IEEEkeywords}
Optimized certainty equivalent, conditional value at risk, quantum measurements, semidefinite programming.
\end{IEEEkeywords}

\IEEEpeerreviewmaketitle

\vspace{-1mm}
\section{Introduction}
The state of a quantum system cannot be read directly, and information about it can be acquired only by performing a measurement. Moreover, a measurement generally disturbs the state on which it acts, so that repeated observations of the same system are generally not possible \cite{helstrom, holevo,wilde2013quantum,simeone2026}. The measurement must therefore be committed to in advance, and it must be selected so as to extract the information that is most valuable for the task at hand. The design of quantum measurements is accordingly a central problem in quantum metrology and quantum information theory, underlying applications that range from quantum optical communications \cite{krovi2015optimal, krikidis2025}, to quantum sensing \cite{nikoloska2025adaptive, cui2026}, quantum statistical inference \cite{liu2026unifying, cumitini2026}, and quantum decision making \cite{barnett, busemeyer2012quantum}.

Depending on the application, the measurement is designed to maximize a utility that captures the value of the outcome. Notably, in state discrimination the utility is the accuracy with which the state is identified in a known ensemble \cite{helstrom, barnett}, and in quantum sensing the utility quantifies the accuracy of parameter estimation \cite{nikoloska2025adaptive, cui2026}. In all of these formulations, the design goal is specifically the \emph{mean} utility, averaged over the prior on the state and on the randomness of the measurement outcomes. 

With a mean utility objective, the optimization problem often reduces to a semidefinite program (SDP) \cite{eldar2003, nakahira2017}, whose solution is characterized by a set of necessary and sufficient conditions \cite{ykl, holevo}. Closed-form solutions are available in special cases, including, notably, the Helstrom measurement for the discrimination of two states \cite{helstrom}. More broadly, so-called pretty good measurements are known to be near-optimal for general ensembles and exactly optimal under a group symmetry \cite{eldarforney, krovi2015optimal}. Related constructions have recently been unified as solutions of a relative-entropy criterion in \cite{liu2026unifying}.

Quantum systems are increasingly being considered as \emph{action policies} within larger decision-making systems. For example, in quantum reinforcement learning, a parameterized quantum circuit encodes the observation of the environment into a quantum state, and the action applied to the system is obtained by measuring that state, either by mapping the outcome directly to an action or by post-processing the measured observables \cite{jerbi2021, chen2020, skolik2022, zhao2025}. When policies of this type are deployed in safety-critical loops, the average utility is often not a sufficient design target. In such settings, controllers should be designed to limit the worst realizations of the control cost rather than its mean \cite{kishida2023,zhang2024cvar,liu2021}. There is therefore a need for measurement strategies that are optimal with respect to design measures other than the mean. 

Such criteria are formalized by the \emph{optimized certainty equivalent} (OCE) \cite{bental}, a family of objectives that contains the average utility as a special case. Its most widely used member is the \emph{conditional value at risk} (CVaR) \cite{rockafellar}, which averages the utility over its worst outcomes. Risk-averse criteria have been employed in quantum systems for feedback control \cite{james2004} and for filtering \cite{yamamoto2009} using an exponential utility, and quantum algorithms have been proposed to estimate risk measures \cite{woerner, egger2021}. However, the design of measurements that maximize a risk-averse, OCE, criterion has not been addressed. 

In this letter, we study the design of quantum measurements that maximize the OCE utility. We show that, for a piecewise linear gain function, defining the OCE, thus including the CVaR, the problem reduces to a finite number of SDPs, and we derive the corresponding dual formulation. For the discrimination of two states, we obtain a closed-form solution in the form of a Helstrom measurement \cite{helstrom}. Numerical results quantify the resulting reshaping of the utility distribution. Sec. \ref{sec:setting} introduces the setting, Sec. \ref{sec:avg} considers the average utility, Sec. \ref{sec:oce} addresses the OCE utility, and Sec. \ref{sec:num} presents numerical results.

\vspace{-2mm}
\section{Setting}\label{sec:setting}
We study risk-averse decision making for a quantum system. A system in an unknown state is measured, and the outcome is taken as the action. The true state and the action determine a utility, and the measurement is designed to maximize its OCE.

\vspace{-1mm}
\subsection{OCE Utility}
Let the gain function $f:\mathbb{R}\to\mathbb{R}$ be closed, concave and nondecreasing, with $f(0)=0$ and $1\in \partial f(0)$, where $\partial f(0)$ is the superdifferential of $f$ at the origin, which reduces to $f'(0)=1$ when $f$ is differentiable at $0$.
For a utility random variable $u$, the OCE utility is defined as \cite{bental}
\begin{equation}\label{eq:oce}
    \OCE_f(u)=\sup_{t\in\mathbb{R}}\big\{t+\mathbb{E}\big[f(u-t)\big]\big\}.
\end{equation}
Definition \eqref{eq:oce} can be interpreted as the sum of a deterministic reserve $t$ and of the expected gain $\mathbb{E}[f(u-t)]$ obtained from the random residual $u-t$; the OCE selects the best such decomposition by maximizing over the reserve $t$.

Notable examples of OCE measures include the average utility $\OCE_f(u)\hspace{-0.5mm}=\hspace{-0.5mm}\mathbb{E}[u]$, obtained with $f(u)\hspace{-0.5mm}=\hspace{-0.5mm}u$, and the CVaR at level $\alpha\hspace{-0.5mm}\in\hspace{-0.5mm}(0,1]$, obtained with $f_{\alpha}(u) = \hspace{-1mm}\alpha^{-1}\min\{u,0\}$ \cite{rockafellar}. The CVaR, denoted $\CVaR_{\alpha}(u)=\OCE_{f_{\alpha}}(u)$, equals the average of the utility over the worst $\alpha$-fraction of cases.

Generalizing both examples, we take $f(u)$ to be continuous piecewise linear in $u$, with $K$ kinks at arbitrary values $u_1<u_2<\dots<u_K$. The kinks split the domain into $K+1$ segments, and we let $c_k$ denote the slope of $f$ on the $k$-th segment. Concavity and monotonicity of $f$ then amount to
\begin{equation}\label{eq:slopes}
    c_0\geq c_1\geq\dots\geq c_K\geq0,
\end{equation}
and the normalization $1\in\partial f(0)$ implies
\begin{equation}\label{eq:norm}
    c_K\leq1\leq c_0.
\end{equation}
The average utility corresponds to $K=0$ and $c_0=1$, and the CVaR to $K=1$ with $u_1=0$, $c_0=\alpha^{-1}$, and $c_1=0$.

By selecting a sufficiently large $K$, this class of functions can approximate uniformly, on any bounded interval, an arbitrary gain function $f$, such as $f(u)=\lambda^{-1}(1-e^{-\lambda u})$, which gives the entropic risk $\OCE_f(u)=-\lambda^{-1}\log\mathbb{E}[e^{-\lambda u}]$, with risk aversion increasing in $\lambda>0$.

\vspace{-3mm}
\subsection{Problem Definition}
Fix a quantum ensemble $\mathcal{E}=\{p(s),\rho_s\}_{s\in\Sset}$ on a $d$-dimensional Hilbert space, so that the state is given by density matrix $\rho_s$ with probability $p(s)$, where the index $s$ ranges over a finite set $\Sset$. Since the state is not observed directly, the decision maker acquires information about it by measuring the system with a positive operator-valued measure (POVM) $M=\{M_a\}_{a\in\Aset}$, with positive semidefinite matrices $M_a\succeq 0$ satisfying $\sum_{a\in\Aset} M_a = I$ over a finite set $\Aset$, and we let $\Mset$ denote the set of all such POVMs.

By the Born rule \cite{helstrom}, the outcome of the measurement is a random variable $a$ with conditional distribution $p(a| s)=\Tr(\rho_sM_a)$. The quality of the outcome is measured against the true state $s$ through a non-negative utility $u(s,a)$. For example, in state discrimination, the goal is to identify the state, and one typically sets $\Aset = \Sset$ with utility $u(s,a)=\mathds{1}\{s = a\}$ \cite{barnett}.

A measurement $M$ induces the joint distribution
\begin{equation}\label{eq:joint}
    p_M(s,a) = p(s) \Tr(\rho_s M_a)
\end{equation}
over the state index $s$ and the action $a$, and hence over the distribution of the utility $u(s,a)$. We are interested in the problem of optimizing the measurement $M$ with the goal of maximizing the OCE utility
\begin{equation}\label{eq:problem}
    \max_{M\in\Mset} \OCE_f \big(u(s,a)\big), \quad (s,a) \sim p_M(s,a),
\end{equation}
for a given gain function $f$. Since set $\Mset$ is compact and the objective is continuous in $M$, the maximum in \eqref{eq:problem} is attained.

\begin{remark}[Measurement outcomes as actions] \label{rem:fold}
The formulation \eqref{eq:problem} is equivalent to a two-stage setting that may seem more general. In this setting, the decision maker first applies a POVM $\{N_o\}_{o\in\Oset}$, obtaining an observation $o$ with conditional distribution $p(o| s)=\Tr(\rho_sN_o)$, and then selects an action $a$ through a classical policy $\pi(a| o)$.
Measuring the operators $M_a \hspace{-0.8mm}=\hspace{-0.8mm} \sum_{o\in\Oset}\hspace{-0.8mm}\pi(a| o)N_o$, which satisfy $M_a \hspace{-0.8mm}\succeq\hspace{-0.8mm} 0$ and $\sum_{a\in\Aset}M_a \hspace{-0.8mm}=\hspace{-0.8mm} \sum_{o\in\Oset}N_o \hspace{-0.8mm}=\hspace{-0.8mm} I$,
induces the same joint distribution \eqref{eq:joint}. Since the objective in \eqref{eq:problem} depends on the measurement only through the joint distribution \eqref{eq:joint}, there is no loss of optimality in restricting the design to single-stage POVMs.
\end{remark}

\begin{remark}[Commuting states]\label{rem:commuting}
Consider the classical case in which all states $\rho_s$ in the ensemble $\mathcal{E}$ commute. Let $\{|o\rangle\}_{o\in[d]}$ be a common eigenbasis of all states, where $[d]=\{1,\dots,d\}$. Since $\rho_s$ is positive semidefinite with unit trace, its eigenvalues $p(o| s)=\langle o|\rho_s|o\rangle$ form a conditional distribution over $[d]$ for each $s$. For an arbitrary POVM $M$, the diagonal entries $\pi(a| o)=\langle o|M_a|o\rangle$ of $M_a$ in this basis form a conditional distribution over $\Aset$ for each $o$, and every such distribution is realized by the POVM $M_a=\sum_{o\in[d]}\pi(a| o)|o\rangle\langle o|$. Using these definitions, the Born rule becomes
\begin{equation}\label{eq:classicalstep}
    \Tr(\rho_sM_a) = \sum_{o\in[d]}p(o| s)\pi(a| o),
\end{equation}
and thus the objective in \eqref{eq:problem} depends on the measurement $M$ only through the conditional distribution $\pi(a| o)$. It follows that in this special setting it is optimal to first carry out the projective measurement $\{|o\rangle\langle o|\}_{o\in[d]}$ in the common eigenbasis, and then to apply a classical action policy $\pi(a| o)$, to be optimized, that maps observations $o$ into actions $a$.
\end{remark}

\vspace{-3mm}
\section{Average Utility Maximization} \label{sec:avg}
Consider first the conventional case of the average utility, recovered with the gain function $f(u)=u$, i.e., with $K=0$, so that problem \eqref{eq:problem} becomes
\begin{align}
    \max_{M\in\Mset} \mathbb{E}\big[u(s,a)\big] & =\max_{M\in\Mset} \sum_{s\in\Sset} p(s)\sum_{a\in\Aset}\Tr(\rho_sM_a) u(s,a)\nonumber\\
    & =\max_{M\in\Mset} \sum_{a\in\Aset}\Tr\big(U_aM_a\big),\label{eq:avg}
\end{align}
where we have introduced the matrix
\begin{equation}\label{eq:Ua}
    U_a=\sum_{s\in\Sset}p(s)u(s,a)\rho_s.
\end{equation}
The matrix \eqref{eq:Ua}, which is referred to as the utility observable for action $a$, is positive semidefinite, i.e., $U_a\succeq0$, and summarizes all relevant knowledge about the ensemble $\mathcal{E}$.

\begin{remark}[Commuting states (continued)]
For commuting states, substituting \eqref{eq:classicalstep} into \eqref{eq:avg} gives the maximization
\begin{equation}\label{eq:classical-1}
    \max_{\pi(a| o)}\sum_{o\in[d]}p(o) \sum_{a\in\Aset}\pi(a| o) \sum_{s\in\Sset} p(s| o)u(s,a),
\end{equation}
where the optimization is now over the action policy $\pi(a| o)$, with $p(o)=\sum_{s\in\Sset}p(s)p(o| s)$ the marginal distribution of the observation and $p(s| o)=p(s)p(o| s)/p(o)$ the corresponding posterior. The solution is given by the deterministic policy returning, for each observation $o$, any action $a^{\star} \in \arg\max_{a\in\Aset} \sum_{s\in\Sset} p(s| o)u(s,a)$.
\end{remark}

Problem \eqref{eq:avg} can be written more explicitly as
\begin{align}
    \max_{\{M_a\}} & \sum_{a\in\Aset}\Tr\big(U_a M_a\big)\nonumber\\
    \text{s.t.}~ & M_a\succeq 0 \quad \forall a\in\Aset,\nonumber\\
    & \sum_{a\in\Aset}M_a=I. \label{eq:primal}
\end{align}
This is an SDP, which coincides with the problem of quantum statistical decision theory \cite{helstrom,holevo,ykl,barnett} and can be solved using standard tools. By strong duality, which holds since the primal \eqref{eq:primal} admits the strictly feasible point $M_a=I/|\Aset|$, the maximum average utility is also given by the dual problem
\begin{equation}\label{eq:dual}
    \min_{\Lambda \succeq 0} \Tr\Lambda\quad \text{s.t.} ~\Lambda\succeq U_a\quad\forall a\in\Aset.
\end{equation}
The attainable average utility is the smallest trace of a positive semidefinite operator dominating all utility observables \cite{coutts2021certifying}.

\vspace{-1mm}
\section{OCE Utility Maximization} \label{sec:oce}
We now extend the design to a general OCE utility. Substituting \eqref{eq:oce} into \eqref{eq:problem} and using \eqref{eq:joint} gives the problem
\begin{align}
    & \max_{M\in\Mset} \sup_{t\in\mathbb{R}}\bigg\{t + \sum_{s\in\Sset} p(s) \sum_{a\in\Aset}\Tr(\rho_s M_a)f\big(u(s,a)-t\big)\bigg\}\nonumber\\
    & \quad=\sup_{t\in\mathbb{R}}\bigg\{g(t)=t+\max_{M\in\Mset}\sum_{a\in\Aset}\Tr\big(U_a(t) M_a\big)\bigg\}, \label{eq:oceprob}
\end{align}
where the two maximizations have been exchanged in the equality, and we have introduced the tilted utility observables
\begin{equation}\label{eq:tilted}
    U_a(t)=\sum_{s\in\Sset}p(s)f\big(u(s,a)-t\big)\rho_s.
\end{equation}
Unlike the utility observables \eqref{eq:Ua}, the tilted observables \eqref{eq:tilted} are Hermitian but generally not positive semidefinite, since the gain function $f$ may take negative values. The next proposition shows that problem \eqref{eq:oceprob} can be solved by addressing at most $K\,|\Sset|\,|\Aset|$ SDPs akin to \eqref{eq:primal}, where we recall that $K$ is the number of kinks in the piecewise linear function $f$.

\begin{proposition}\label{prop:grid}
Let $f(u)$ be continuous and piecewise linear with $K\geq1$ kinks at $u_1<u_2<\dots<u_K$ and slopes satisfying \eqref{eq:slopes} and \eqref{eq:norm}, and define the finite set
\begin{equation}\label{eq:grid}
    \Tset=\big\{u(s,a)-u_k:s\in\Sset,a\in\Aset,k\in[K]\big\}
\end{equation}
of cardinality at most $K\,|\Sset|\,|\Aset|$. Then the supremum in \eqref{eq:oceprob} is attained at some $\tau\in\Tset$, and an optimal solution of \eqref{eq:oceprob} is obtained by solving the SDP
\begin{align}
    \max_{\{M_a\}} & \sum_{a\in\Aset}\Tr\big(U_a(\tau)M_a\big)\nonumber\\
    \text{\rm{s.t.}}~ & M_a\succeq0 \quad\forall a\in\Aset,\nonumber\\
    & \sum_{a\in\Aset}M_a=I \label{eq:primal-1}
\end{align}
for all $\tau\in\Tset$, and then selecting the pair $(\tau,M)$ that maximizes the objective $\tau + \sum_{a \in \Aset}\Tr(U_a(\tau) M_a)$. For $K=0$, the OCE utility reduces to the average utility and the single SDP \eqref{eq:primal} suffices.
\end{proposition}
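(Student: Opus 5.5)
The plan is to show that $g(t)$ in \eqref{eq:oceprob} is concave and piecewise linear in $t$, with breakpoints only in $\Tset$. Concavity and piecewise linearity will then force the supremum to sit at a breakpoint, provided it is attained at all.

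\emph{Structure of $g$.} For a fixed $M\in\Mset$, let
\[
h_M(t)=t+\sum_{s\in\Sset}\sum_{a\in\Aset}p(s)\Tr(\rho_sM_a)f\big(u(s,a)-t\big).
\]
Each map $t\mapsto f(u(s,a)-t)$ is concave, as the composition of the concave $f$ with an affine map. It is also piecewise linear, with kinks only at $t=u(s,a)-u_k$, $k\in[K]$. Since the weights $p(s)\Tr(\rho_sM_a)$ are nonnegative, $h_M$ is concave and piecewise linear with all kinks in $\Tset$.

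Sort the distinct elements of $\Tset$ as $\tau_1<\dots<\tau_m$. On each closed interval between consecutive points, and on the two unbounded rays, every $h_M$ is affine. Then $g(t)=\max_{M\in\Mset}h_M(t)$ is a pointwise maximum of affine functions on each such interval, so it is convex there. Note that $g$ itself need not be concave globally, but this convexity on each piece is what the argument needs. A convex function on a closed bounded interval attains its maximum at an endpoint, so on every bounded piece $\sup g$ is reached at some $\tau_j\in\Tset$.

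\emph{The unbounded rays.} This is the step I expect to be the main obstacle.
\begin{itemize}
\item For $t\geq\tau_m$, every argument satisfies $u(s,a)-t\leq u_1$, so each $f(u(s,a)-t)$ lies on the first segment, with slope $c_0$. Hence $h_M(t)=t(1-c_0)+\text{const}_M$, which is nonincreasing because $c_0\geq1$ by \eqref{eq:norm}. As $g$ is a supremum of nonincreasing functions, it is nonincreasing on $[\tau_m,\infty)$.
\item For $t\leq\tau_1$, every argument satisfies $u(s,a)-t\geq u_K$, so each term lies on the last segment, with slope $c_K\leq1$. Then $h_M$ has slope $1-c_K\geq0$, and $g$ is nondecreasing on $(-\infty,\tau_1]$.
\end{itemize}
Therefore $\sup_t g(t)=\max_j g(\tau_j)$, and the supremum is attained at some $\tau\in\Tset$. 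The care needed is to check that the ray arguments use exactly \eqref{eq:norm} and that $K\geq1$ guarantees $\Tset\neq\emptyset$.

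\emph{Recovering an optimal solution.} For each $\tau\in\Tset$, $g(\tau)$ equals $\tau$ plus the optimal value of the SDP \eqref{eq:primal-1}. This SDP is attained because $\Mset$ is compact. Selecting the best pair $(\tau,M)$ then gives an optimal solution of \eqref{eq:oceprob}, because the exchange of the two maximizations in \eqref{eq:oceprob} is valid. The cardinality bound $|\Tset|\leq K|\Sset||\Aset|$ is immediate from \eqref{eq:grid}. For $K=0$ we have $f(u)=u$, so $c_0=1$ and $t$ cancels; the objective is then $\sum_a\Tr(U_aM_a)$, and the single SDP \eqref{eq:primal} suffices.
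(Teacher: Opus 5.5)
Your proof is correct and follows essentially the paper's argument: every $h_M$ is affine between consecutive points of $\Tset$, so $g$ is a pointwise maximum of affine functions on each bounded piece and attains its maximum there at an endpoint, while on the two rays the slopes $1-c_0\le 0$ and $1-c_K\ge 0$ from \eqref{eq:norm} make $g$ monotone in the right direction. The one flaw is expository: your opening sentence says the plan is to show $g$ is concave, which is false in general and contradicted by your own later remark, so drop it, since the argument uses only convexity of $g$ on each piece.
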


\textit{Proof:} See Appendix~\ref{apx_prop}.

The following result is a direct consequence of the dual formulation \eqref{eq:dual} and of Proposition \ref{prop:grid}.

\begin{corollary}[Dual formulation]\label{cor:dual}
The optimal value of problem \eqref{eq:oceprob} equals $\max_{\tau\in\Tset} \{\tau + \Tr(\Lambda^\star(\tau))\}$, where $\Lambda^\star(\tau)$ is the optimal solution of the dual problem
\begin{equation}\label{eq:dual_prop}
    \min_{\Lambda=\Lambda^{\dagger}} \Tr\Lambda \quad \text{\rm{s.t.}} ~ \Lambda\succeq U_a(\tau) \quad \forall a\in\Aset.
\end{equation}
\end{corollary}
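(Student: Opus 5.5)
The corollary follows by applying SDP duality pointwise in $\tau$ and then invoking Proposition~\ref{prop:grid}. First, by Proposition~\ref{prop:grid}, the optimal value of \eqref{eq:oceprob} equals $\max_{\tau\in\Tset}\{\tau+V(\tau)\}$, where $V(\tau)$ is the optimal value of the SDP \eqref{eq:primal-1}. It therefore suffices to show that, for each fixed $\tau\in\Tset$, $V(\tau)$ equals the optimal value of \eqref{eq:dual_prop}, namely $\Tr\Lambda^\star(\tau)$.

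For fixed $\tau$, \eqref{eq:primal-1} has the same form as \eqref{eq:primal}, with $U_a$ replaced by the Hermitian matrix $U_a(\tau)$. I would derive its dual directly from the Lagrangian rather than quoting \eqref{eq:dual}. Introduce a Hermitian multiplier $\Lambda$ for the equality $\sum_a M_a=I$; the Lagrangian is
$\Tr\Lambda+\sum_a\Tr\big((U_a(\tau)-\Lambda)M_a\big)$.
Its supremum over $M_a\succeq0$ is finite, and then equals $\Tr\Lambda$, exactly when $\Lambda\succeq U_a(\tau)$ for every $a$. This gives \eqref{eq:dual_prop}. Weak duality, $\sum_a\Tr(U_a(\tau)M_a)\le\Tr\Lambda$, follows from $\Tr\big((\Lambda-U_a(\tau))M_a\big)\ge0$ together with the normalization $\sum_aM_a=I$.

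The one point requiring care is that $\Lambda$ is \emph{not} constrained to be positive semidefinite in \eqref{eq:dual_prop}, unlike in \eqref{eq:dual}. There, $\Lambda\succeq0$ was implied by $\Lambda\succeq U_a\succeq0$ and so could be added for free. Here $U_a(\tau)$ may be indefinite, because $f$ takes negative values, so the constraint must be dropped, and the derivation above shows it never arises.

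Strong duality holds by Slater's condition. The primal \eqref{eq:primal-1} is strictly feasible at $M_a=I/|\Aset|$, and its value is finite because the feasible set is compact. Hence the dual optimum is attained and equals the primal optimum. Alternatively, the dual is strictly feasible with $\Lambda=cI$ for $c$ large, so both optima are attained; in particular $\Lambda^\star(\tau)$ exists. Substituting $V(\tau)=\Tr\Lambda^\star(\tau)$ into the expression from Proposition~\ref{prop:grid} yields the claim.
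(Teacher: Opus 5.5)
Your proof is correct and follows the paper's route. The paper states the corollary as a direct consequence of Proposition~\ref{prop:grid} combined with SDP strong duality for \eqref{eq:primal-1} (strictly feasible at $M_a=I/|\Aset|$). Your rederivation of the dual with an unconstrained Hermitian multiplier is a careful touch: quoting \eqref{eq:dual} verbatim would impose $\Lambda\succeq0$, which can open a duality gap when the tilted observables $U_a(\tau)$ are indefinite, and this is why \eqref{eq:dual_prop} is stated over $\Lambda=\Lambda^{\dagger}$.
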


The next result elaborates on the special case of binary state discrimination, which yields a specific type of Helstrom measurement \cite{helstrom}.

\begin{corollary}[Binary state discrimination] \label{cor:binary}
Consider the binary case $\Sset=\Aset=\{0,1\}$, with a utility satisfying $u(s,s)\geq u(s,1-s)$, so that a correct decision is never worse than an incorrect one. For $\tau\in\Tset$, define the weights
\begin{equation}\label{eq:weights}
    w_s(\tau) = p(s)\big[f\big(u(s, s)-\tau\big) - f\big(u(s, 1-s)-\tau\big)\big] \geq 0.
\end{equation}
Then an optimal solution of problem \eqref{eq:oceprob} is given by
\begin{equation}\label{eq:binary-M}
    M_0=\Pi_+\big(\eta(\tau^\star)\rho_0-(1-\eta(\tau^\star)) \rho_1\big), \quad M_1=I-M_0,
\end{equation}
with $\eta(\tau) = w_0(\tau) / \big(w_0(\tau) + w_1(\tau)\big)$, where $\Pi_+(A)$ is the projector onto the eigenvectors of $A$ with positive eigenvalues, and the value $\tau^\star\in\Tset$ in \eqref{eq:binary-M} is selected as
\begin{align}\label{eq:binary-g}
    \tau^\star=&\arg\max_{\tau\in\Tset}\Big\{\tau+\Tr(U_1(\tau)) \nonumber\\
    &+ \hspace{-0.5mm} \frac{1}{2}\Big[w_0(\tau)\hspace{-0.5mm} - \hspace{-0.5mm} w_1(\tau) \hspace{-0.5mm}+ \hspace{-0.5mm} \big\|w_0(\tau)\rho_0 \hspace{-0.5mm}-\hspace{-0.5mm} w_1(\tau)\rho_1\big\|_1\Big]\Big\},
\end{align}
with $\|\cdot\|_1$ being the trace norm.
\end{corollary}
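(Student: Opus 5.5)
By Proposition \ref{prop:grid}, it suffices to solve, for each fixed $\tau\in\Tset$, the binary SDP \eqref{eq:primal-1} in closed form. We then maximize $\tau$ plus the resulting optimal value over the finite set $\Tset$. The plan is to rewrite the binary objective so that it depends on the measurement only through $M_0$, which turns it into a Helstrom-type problem.

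\textbf{Nonnegativity of the weights.} Since $u(s,s)\geq u(s,1-s)$ and $f$ is nondecreasing, we have $f(u(s,s)-\tau)\geq f(u(s,1-s)-\tau)$. Hence $w_s(\tau)\geq 0$, which gives the inequality in \eqref{eq:weights}.

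\textbf{Reduction to a function of $M_0$.} Using $M_1=I-M_0$, the objective becomes
\begin{equation*}
\Tr(U_0(\tau)M_0)+\Tr(U_1(\tau)M_1)=\Tr(U_1(\tau))+\Tr\big((U_0(\tau)-U_1(\tau))M_0\big).
\end{equation*}
Expanding \eqref{eq:tilted} gives
\begin{equation*}
U_0(\tau)-U_1(\tau)=\sum_{s}p(s)\big[f(u(s,0)-\tau)-f(u(s,1)-\tau)\big]\rho_s=w_0(\tau)\rho_0-w_1(\tau)\rho_1 .
\end{equation*}
The sign is correct for $s=1$, because $f(u(1,0)-\tau)-f(u(1,1)-\tau)=-w_1(\tau)/p(1)$. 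Write $A(\tau)=w_0(\tau)\rho_0-w_1(\tau)\rho_1$.

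\textbf{Solving the inner problem.} The problem is now $\max_{0\preceq M_0\preceq I}\Tr(A(\tau)M_0)$. Diagonalizing $A(\tau)$ shows that the maximum is the sum of its positive eigenvalues, attained by $M_0=\Pi_+(A(\tau))$. This sum equals $\frac12[\Tr A(\tau)+\|A(\tau)\|_1]$. Since $\Tr\rho_s=1$, we have $\Tr A(\tau)=w_0(\tau)-w_1(\tau)$. Adding $\tau+\Tr(U_1(\tau))$ reproduces exactly the objective in \eqref{eq:binary-g}. Maximizing over $\Tset$ then gives $\tau^\star$, by Proposition \ref{prop:grid}.

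\textbf{Normalization to $\eta$.} When $w_0(\tau)+w_1(\tau)>0$, dividing $A(\tau)$ by this positive scalar leaves the positive eigenspace unchanged. Hence $\Pi_+(A(\tau))=\Pi_+(\eta(\tau)\rho_0-(1-\eta(\tau))\rho_1)$, which is \eqref{eq:binary-M}.

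\textbf{Main obstacle: the degenerate case.} The only delicate point is $w_0(\tau^\star)+w_1(\tau^\star)=0$, where $\eta$ is undefined. In that case $A(\tau^\star)=0$, so every POVM is optimal for that $\tau^\star$. I would handle this with a brief remark: any convention for $\eta$, or any $M_0$, is then optimal. Alternatively, one can note that the maximizer can be chosen with a nonzero denominator whenever one exists.
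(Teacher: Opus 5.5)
Your proposal is correct and follows the same route as the paper. You substitute $M_1=I-M_0$, identify $U_0(\tau)-U_1(\tau)=w_0(\tau)\rho_0-w_1(\tau)\rho_1$, maximize $\Tr(A(\tau)M_0)$ over $0\preceq M_0\preceq I$ by diagonalizing $A(\tau)$, and rescale by $w_0(\tau)+w_1(\tau)$. You also make explicit several points the paper leaves implicit: the nonnegativity of the weights, the identity $\sum_{\lambda_i>0}\lambda_i=\frac12[\Tr A(\tau)+\|A(\tau)\|_1]$ that yields the objective in \eqref{eq:binary-g}, and the degenerate case $w_0(\tau^\star)+w_1(\tau^\star)=0$.
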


\textit{Proof:} See Appendix \ref{apdx_cor_2}.

For the average utility, with $f(u)\hspace{-0.8mm}=\hspace{-0.8mm}u$ and the detection utility $u(s,a)\hspace{-0.5mm}=\hspace{-0.5mm}\mathds{1}\{s=a\}$, the measurement \eqref{eq:binary-M} recovers the standard Helstrom measurement $\Pi_+\hspace{-0.5mm} \big(p(0)\rho_0\hspace{-0.5mm}-\hspace{-0.5mm}p(1)\rho_1\big)$.

\vspace{-1mm}
\section{Numerical Results} \label{sec:num}
We consider a Hilbert space of dimension $d=8$, sets
$\Sset=\Aset=\{1,2,3,4\}$, and the non-uniform prior $p(s)=(0.15, 0.2, 0.25, 0.4)$. The states are given by
\begin{equation}\label{eq:states}
    \rho_s=\gamma\sigma_s+(1-\gamma)C(\sigma_s),\quad\gamma\in[0,1],
\end{equation}
where $\sigma_s=G_sG_s^{\dagger}/\Tr(G_sG_s^{\dagger})$ is a random density operator of rank $r=4$, the entries of $G_s\in\mathbb{C}^{d\times r}$ are sampled i.i.d. from the standard complex Gaussian distribution $\mathcal{C N}(0,1)$, and $C(\sigma)=\sum_{o\in[d]}|o\rangle\langle o|\langle o|\sigma|o\rangle$ is the dephasing channel in the computational basis $\{|o\rangle\}_{o\in[d]}$. The parameter $\gamma$ controls the coherence of state $\rho_s$ with respect to the computational basis \cite{simeone2026}, with $\gamma=0$ returning incoherent, diagonal states.

The utility is $u(s,a)=1-|a-s|/(|\Sset|-1)$,
which decreases with the distance between the true state index $s$ and the action $a$. We consider the average utility, obtained with $f(u)=u$, the CVaR at level $\alpha=0.2$, and a more complex OCE utility described below. As a benchmark, we restrict the POVM to be diagonal in the computational basis, $M_a=\sum_{o\in[d]}\pi(a| o)|o\rangle\langle o|$, and optimize over the classical policy $\pi(a| o)$. By Remark~\ref{rem:commuting}, this two-stage approach is optimal at $\gamma=0$. The SDPs \eqref{eq:primal-1} are solved with CVX \cite{grant2014cvx}.

\subsubsection{On the Optimal Solution}
We first demonstrate the key result in Proposition \ref{prop:grid} that the supremum over variable $t$ in \eqref{eq:oceprob} is attained on the finite set $\Tset$ in \eqref{eq:grid}. We choose an OCE utility with $K=2$ kinks at $u_1=0$, $u_2=1/3$, and slopes $(c_0, c_1, c_2)=(2, 1, 0.2)$, so that set \eqref{eq:grid} is given by $\Tset=\{\tau_1=-1/3, \tau_2=0, \tau_3=1/3, \tau_4=2/3, \tau_5=1\}$.
Fig. \ref{fig_prop} plots function $g(t)$ in \eqref{eq:oceprob} versus $t$. Within each interval $[\tau_{j-1}, \tau_j]$, the function $g(t)$ is seen to be convex, and outside the interval $[\tau_1,\tau_5]$, the function $g(t)$ is nondecreasing to the left and nonincreasing to the right. The maximum of function $g(t)$ is therefore attained on one of the points in set $\Tset$.

\begin{figure}[t]
	\centering
	\includegraphics[width = 0.65\linewidth]{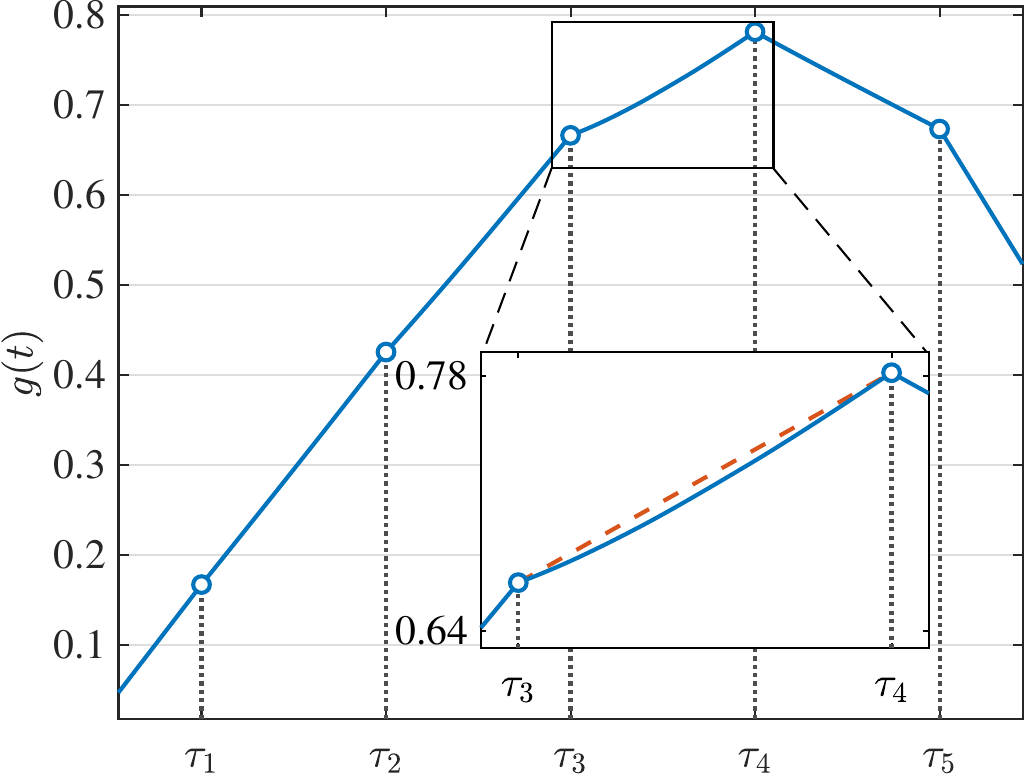}
	\caption{An example of objective function $g(t)$ in \eqref{eq:oceprob} versus $t$.}\label{fig_prop}
    \vspace{-2mm}
\end{figure}

\begin{figure}[t]
	\centering
	\includegraphics[width = 0.65\linewidth]{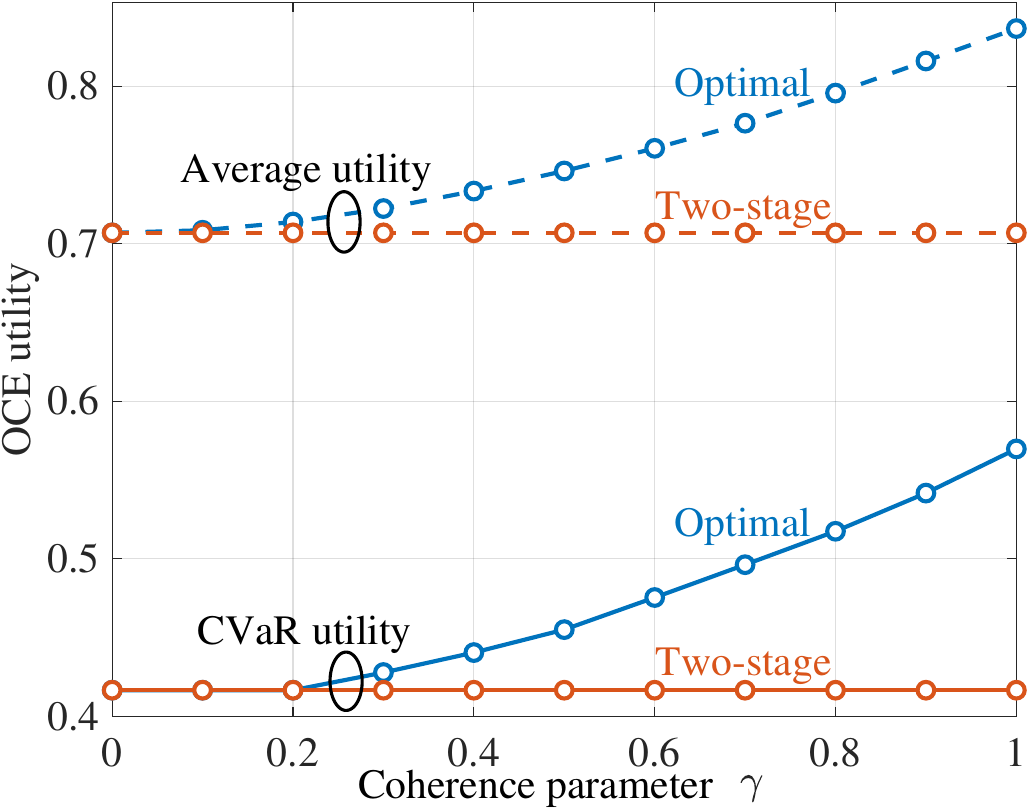}
	\caption{Average utility and CVaR utility versus the coherence parameter $\gamma$ for the optimal measurement and the benchmark two-stage process with standard measurement in the computational basis ($\alpha=0.2$).}\label{fig_measure}
    \vspace{-2mm}
\end{figure}

\begin{figure}[t]
	\centering
	\includegraphics[width = 0.65\linewidth]{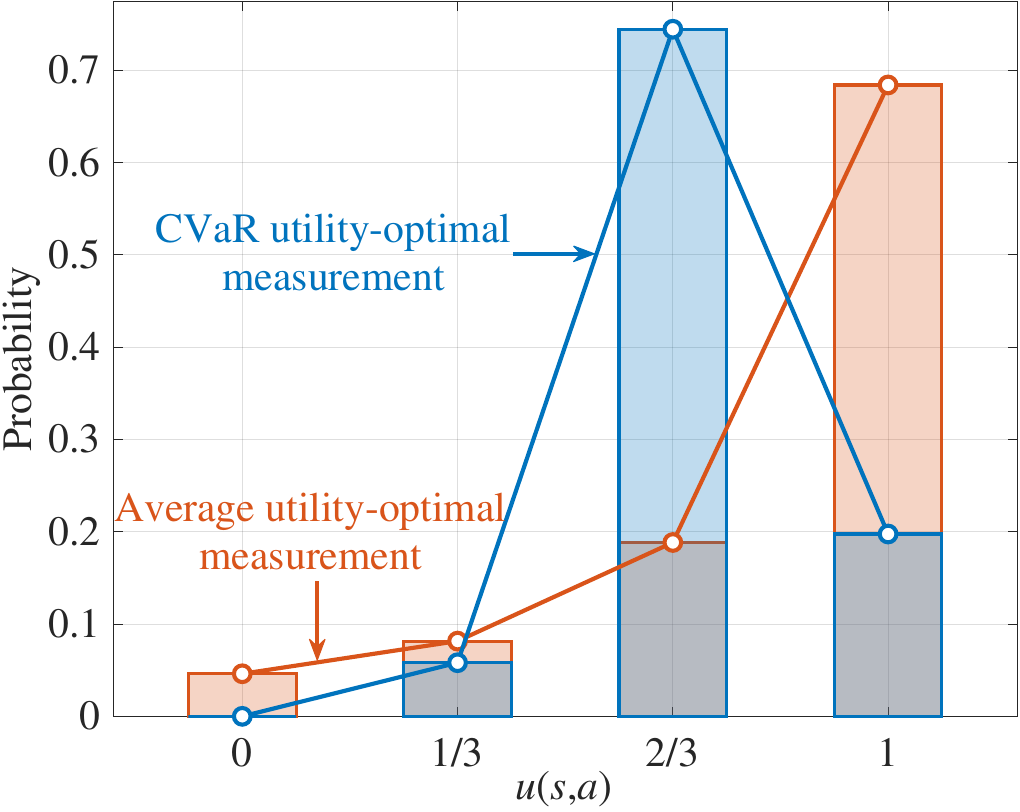}
	\caption{Empirical distribution of the utility $u(s,a)$, obtained by the optimal measurement under the average utility and the CVaR ($\alpha=0.2$, $\gamma=1$).} \label{fig_util}
    \vspace{-2mm}
\end{figure}

\subsubsection{On the Measurement Design}
We now compare the optimal POVM with the two-stage benchmark based on computational-basis measurements. Fig. \ref{fig_measure} plots the average utility and the CVaR versus the coherence parameter $\gamma$ in \eqref{eq:states}.
The utility attained by the two-stage process does not depend on the coherence parameter $\gamma$, since by \eqref{eq:classicalstep} it depends on the states only through the diagonal entries $\langle o|\rho_s|o\rangle = \langle o|\sigma_s|o\rangle$. By Remark \ref{rem:commuting}, the two-stage procedure and the optimal measurement have the same performance for $\gamma=0$. As $\gamma$ increases, however, the gap between the two measurements increases for both average utility and CVaR.

\subsubsection{On the Effect of Risk Aversion}
We finally compare the empirical distribution of the utility $u(s,a)$ induced by the optimal measurement under the average utility and the CVaR for $\gamma=1$. As shown in Fig. \ref{fig_util}, compared with the average utility, the CVaR improves the distribution at the lower tail, making the worst outcomes much less likely at the cost of a lower average utility.

\vspace{-3mm}
\section{Conclusions}
In this letter, we have studied the design of quantum measurements that maximize an OCE utility. For a piecewise linear gain function, e.g., for the CVaR, the design reduces to a finite number of SDPs, in place of a single SDP for the average utility, with a closed-form solution for the discrimination of two states. Numerical results show that risk aversion makes the worst outcomes much less likely at a moderate cost in average utility, and that the gain over a fixed computational-basis measurement grows with the coherence of the states.

\vspace{-3mm}
\appendix

\vspace{-1mm}
\subsection{Proof of Proposition \ref{prop:grid}}\label{apx_prop}
For fixed $s$ and $a$, the function $f\big(u(s,a)-t\big)$ changes slope only when $u(s,a)-t$ crosses a kink, that is, at the values of $t$ in $\Tset(s,a)\hspace{-0.5mm}=\hspace{-0.5mm}\{u(s,a)-u_k:k\in[K]\}$, and is affine between consecutive such values. Let $\tau_1<\dots<\tau_J$ be the ordered entries of $\Tset$ in \eqref{eq:grid}, which collects the values in $\Tset(s,a)$ over all $s$ and $a$. On each interval $[\tau_j,\tau_{j+1}]$, the objective $t\hspace{-0.1mm}+\hspace{-0.2mm}\sum_{a\in\Aset}\hspace{-0.1mm}\Tr\big(U_a(t) M_a\big)$ in \eqref{eq:oceprob} is affine in $t$ for every $M\hspace{-0.5mm}\in\hspace{-0.5mm}\Mset$, so that $g(t)$ is a pointwise maximum of affine functions and is convex there. Its maximum over the interval is attained at one of the two endpoints, both of which belong to $\Tset$.

It remains to exclude the two unbounded intervals. For $t\geq\tau_J$, we have $u(s,a)-t\leq u_1$ for all $s$ and $a$, and hence $f\big(u(s,a)-t\big)=c_0\big(u(s,a)-t\big)+b_0$ for a suitable constant $b_0$. Using $c_0\geq0$ from \eqref{eq:slopes}, this yields $g(t)=(1-c_0)t + c_0 \max_{M\in\Mset}\mathbb{E}\big[u(s,a)\big] + b_0$, which is nonincreasing in $t$ by \eqref{eq:norm}. Symmetrically, for $t\leq\tau_1$ we have $u(s,a)-t\geq u_K$ for all $s$ and $a$, so that $g(t)=(1-c_K)t+c_K\max_{M\in\Mset}\mathbb{E}[u(s,a)]+b_K$ for a suitable constant $b_K$, which is nondecreasing in $t$ by \eqref{eq:norm}. The supremum of $g(t)$ is therefore attained on $[\tau_1,\tau_J]$, and thus at one of the values in $\Tset$.

\vspace{-1mm}
\subsection{Proof of Corollary \ref{cor:binary}} \label{apdx_cor_2}
Substituting $M_1=I-M_0$ into the objective of \eqref{eq:primal-1} gives
\begin{equation}\label{eq:binary-split}
    \sum_{a\in\Aset}\Tr\big(U_a(\tau)M_a\big)=\Tr \big(U_1(\tau)\big) + \Tr\big(A(\tau)M_0\big),
\end{equation}
where only the second term depends on $M_0$ and we have introduced the matrix
\begin{equation}\label{eq:binary-A}
    A(\tau) = U_0(\tau)-U_1(\tau)=w_0(\tau)\rho_0-w_1(\tau)\rho_1,
\end{equation}
with the second equality following from \eqref{eq:tilted} and \eqref{eq:weights}. Since \eqref{eq:binary-A} is Hermitian, its spectral decomposition $A(\tau) = \sum_{i\in[d]}\lambda_i|i\rangle\langle i|$ gives $\Tr\big(A(\tau)M_0\big) = \sum_{i\in[d]}\lambda_i \langle i|M_0|i\rangle$,
where $0\preceq M_0\preceq I$ gives $0\leq\langle i|M_0|i \rangle\leq1$. The sum
is therefore maximized by $M_0=\Pi_+\big(A(\tau)\big)$, which gives \eqref{eq:binary-M} after normalizing $A(\tau)$ by $w_0(\tau)+w_1(\tau)$. 


\clearpage
\bibliographystyle{IEEEtran}
\bibliography{refs}

\end{document}